\newtheorem{theorem}{Theorem}
\newtheorem{definition}{Definition}
\newtheorem{proposition}{Proposition}
\def\>{\ensuremath{\rangle}}
\def\<{\ensuremath{\langle}}
\def\BibTeX{{\rm B\kern-.05em{\sc i\kern-.025em b}\kern-.08em
    T\kern-.1667em\lower.7ex\hbox{E}\kern-.125emX}}
\begin{document}

\title{SOFT: a high-performance simulator for universal fault-tolerant quantum circuits}
\author{%
  Riling Li$^{1,*, \dagger}$,
  Keli Zheng$^{2,*}$,
  Yiming Zhang$^{3,4}$,
  Huazhe Lou$^{5}$,
  Shenggang Ying$^{1}$,
  Ke Liu$^{3,4, \dagger}$,
  Xiaoming Sun$^{6, \dagger}$\\
  \IEEEauthorblockA{$^{1}$Institute of Software, Chinese Academy of Sciences}
  \IEEEauthorblockA{$^{2}$The University of Hong Kong}
  \IEEEauthorblockA{$^{3}$Hefei National Research Center for Physical Sciences at the Microscale and School of Physical Sciences,\\ University of Science and Technology of China}
  \IEEEauthorblockA{$^{4}$Shanghai Research Center for Quantum Science\\ and CAS Center for Excellence in Quantum Information and Quantum Physics,\\ University of Science and Technology of China}
  \IEEEauthorblockA{$^{5}$Arclight Quantum Corporation}
  \IEEEauthorblockA{$^{6}$Institute of Computing Technology,
Chinese Academy of Sciences}\\
  \IEEEauthorblockA{$^{*}$Equal contribution.}
  \IEEEauthorblockA{$^{\dagger}$Corresponding author: \texttt{haoliri0@gmail.com, ke.liu@ustc.edu.cn, sunxiaoming@ict.ac.cn}.}
}

\maketitle

\begin{abstract}
Circuit simulation tools are critical for developing and assessing quantum-error-correcting and fault-tolerant strategies. In this work, we present SOFT, a high-performance SimulatOr for universal Fault-Tolerant quantum circuits. Integrating the generalized stabilizer formalism and highly optimized GPU parallelization, SOFT enables the simulation of noisy quantum circuits containing non-Clifford gates at a scale not accessible with existing tools. To provide a concrete demonstration, we simulate the state-of-the-art magic state cultivation (MSC) protocol at code distance $d=5$, involving 42 qubits, 72 $T$ / $T^\dagger$ gates, and mid-circuit measurements. Using only modest GPU resources, SOFT performs over 200 billion shots and achieves the first ground-truth simulation of the cultivation protocol at a non-trivial scale. 
This endeavor not only certifies the MSC's effectiveness for generating high-fidelity logical $T$-states, but also reveals a large discrepancy between the actual logical error rate and the previously reported values. 
Our work demonstrates the importance of reliable simulation tools for fault-tolerant architecture design, advancing the field from simulating quantum memory to simulating a universal quantum computer.
\end{abstract}

\begin{IEEEkeywords}
Fault-tolerant Quantum Computation, Parallel Computing, Quantum Circuit Simulator, Generalized Stabilizer Formalism. 
\end{IEEEkeywords}

\section{Introduction}
The development of a fully-fledged quantum computer fundamentally relies on quantum error correction (QEC) and fault-tolerance (FT) to protect quantum information from decoherence~\cite{Dennis02, Shor95, Steane96,Gottesman98}.
Driven by rapid advances in hardware, various quantum platforms—including superconducting systems, neutral atoms, and trapped ions—have recently realized functional logical qubits, successfully demonstrating the suppression of logical error rates~\cite{Google25,He25,Moses23,Bluvstein24}.
Consequently, the field is accelerating its transition from realizing quantum memory to achieving universal quantum computation. A central challenge in this pursuit is the fault-tolerant implementation of logical non-Clifford gates, such as the $T$ gate and Toffoli gate. While these operations underpin the theoretical advantage of quantum algorithms, they also dominate the qubit and gate overhead in practical applications. Therefore, substantial contemporary research is devoted to developing low-overhead fault-tolerant protocols, particularly to reduce the high costs of logical non-Clifford gates.

Nonetheless, despite the emergence of various potentially efficient proposals for implementing logical non-Clifford gates with noisy qubits~\cite{Litinski19,Chamberland20,gidney2024magic}, reliably assessing their validity and performance remains a crucial bottleneck. 
The difficulty originates from the fundamental limitations of simulating non-Clifford gates on classical computers. As implied by the Gottesman-Knill theorem~\cite{gottesman1997stabilizer}, the presence of non-Clifford operations, such as the $T$ gate, causes the simulation complexity to scale exponentially with the gate count. 
Consequently, researchers typically resort to empirical conjectures or uncontrolled approximations to estimate the performance of sophisticated protocols. This inability to establish ground-truth assessment creates a critical gap in resource estimation and architectural design, impeding the development of resource-efficient fault-tolerant quantum computation.

To address this gap, we present SOFT—a high-performance \textbf{S}imulat\textbf{O}r for universal \textbf{F}ault-\textbf{T}olerant quantum circuits.
We focus here on noisy Clifford $+T$ circuits, which represents the most common basis for universal quantum computation and dominates the implementation protocols of non-Clifford gates. SOFT integrates the generalized stabilizer formalism with GPU acceleration, with a fully device-resident shot-parallel design that preallocates per-shot buffers and updates the generalized stabilizer tableau via lightweight staged kernels.
Moreover, circuit-level noise and mid-circuit measurements are faithfully accounted to reflect the conditions of noisy quantum hardware. 
The presence of noise renders the simulation more challenging than the noiseless case and makes efficient sampling even more crucial.
 Our simulation strategy boosts the simulation throughput by {\it several orders} of magnitude compared to existing tools, thus making it possible for the first time to certify non-Clifford logical gates targeting extremely high fidelity.

As a concrete demonstration, we apply SOFT to the magic state cultivation (MSC) protocol~\cite{gidney2024magic}, a recently proposed method for efficiently preparing high-fidelity logical $T$ states. 
The distance-3 ($d=3$) cultivation, comprising 15 qubits and 22 $T$/$T^\dagger$ gates, is amenable to brute-force simulation and has recently been experimentally realized on Google's Willow processor~\cite{rosenfeld2025magic}. 
However, scaling the circuit to distance-5, which contains 42 qubits and 72 $T$/$T^\dagger$ gates, is already computationally intractable with existing tools; hence its performance was estimated based on an empirical Clifford approximation~\cite{gidney2024magic}.
With SOFT, we now accurately simulate the $d=5$ cultivation.

We simulate the $d=5$ noisy cultivation circuit and collect over 200 billion samples, using 16 NVIDIA H800 GPUs within 20 days.
Leveraging this ground-truth simulation capability and extensive statistical data, we are able to reliably analyze fault propagation and state fidelity.
 We find that $d=5$ cultivation indeed generates high-fidelity logical $T$-states, thereby confirming the effectiveness and scalability of the  protocol. 
 However, their actual fidelity differs significantly from previously reported values~\cite{gidney2024magic}.
At physical error rates of $p=10^{-3}$ and $5\times 10^{-4}$, which are relevant to near-term quantum hardware, the discrepancy factor is nearly an order of magnitude, as summarized in Table I. 
We also observe that this discrepancy widens at larger circuit sizes and lower noise levels—precisely the regimes where an extremely large number of samples is required to certify high-fidelity non-Clifford states. 
Therefore, our work not only provides a definitive validation of the MSC protocol but also prominently highlights the crucial distinction between simulating quantum memories and simulating universal quantum computers.

\begin{table*}
    \centering
    \caption{Our reliable simulations vs. empirical conjecture in Ref.~\cite{gidney2024magic}.}
    \label{tab1}
    {\begin{tabular}{c|c|c|c|c|c|c}
      \hline
     Noise Strength & Method & Logical Error Rate & Discrepancy Factor & Total Shots & Preserved Shots & Detected Logical Errors \\
      \Xhline{1.5pt}
    \multirow{2}{*}{0.002}  & Reliable simulation  & $3.41\times 10^{-8}$ & 1.87 & 31.0 billion & 0.64 billion  & 22  \\
      \cline{2-7}
    &  Empirical conjecture & $ 1.82\times 10^{-8} $ & / & N/A  & N/A  & N/A \\
    \hline
    \multirow{2}{*}{0.001}  & Reliable simulation  & $4.59\times 10^{-9}$ & 7.65 & 74.0 billion & 10.6 billion  & 49  \\
      \cline{2-7}
    &  Empirical conjecture & $6\times 10^{-10}$ & / & N/A  & N/A  & N/A \\
      \hline
    \multirow{2}{*}{0.0005}  & Reliable simulation  & $1.57\times 10^{-10}$ & 7.85 & 134.4 billion  & 50.9 billion  & 8  \\
      \cline{2-7}
    &  Empirical conjecture & $2\times 10^{-11}$ & / & N/A  & N/A  & N/A \\
      \hline
    \end{tabular}}
  \end{table*}

\subsection{Related Works}
This study focuses on quantum circuit simulation, an area that has seen significant development over the past decade.
The most commonly used method is the Schrödinger-style state-vector simulation, where the entire quantum state is represented as a vector that evolves under gate operations.
However, the required memory grows exponentially with the number of qubits, limiting simulations to approximately 45 qubits on modern supercomputing hardware~\cite{haner2017, Sunway2019massively, Sunway2019quantum, de2025universal}.

Another major category of methods is based on tensor-network \cite{huang2020classical, liu2021GB, pan2022solving, fu2024surpassing}.
These approaches are not directly limited by the number of qubits or circuit depth but are instead constrained by a structural parameter known as treewidth.
Tensor-network variants such as MPS~\cite{ayral2023density} (Matrix Product States) and PEPS~\cite{patra2024efficient} (Projected Entangled Pair States) perform well on circuits with specific structures but become inefficient for highly entangled systems, making them less suitable for quantum error-correcting circuits.

A third family of methods relies on decision diagrams (DDs), which compress the representation of quantum states \cite{hong2020tensor, vinkhuijzen2023limdd, sistla2023symbolic}.
While DD-based methods can achieve compact representations for certain structured circuits, they provide limited compression for random or highly entangled circuits.
Furthermore, their complex data structures hinder parallelization, making most DD-based simulators relatively inefficient.

For quantum error-correction circuits, the Gottesman–Knill theorem states that stabilizer circuits composed of CNOT, Hadamard, and phase (S) gates, together with Pauli measurements, can be simulated efficiently in polynomial time.
Hence, simulators based on the stabilizer formalism are particularly suitable for Clifford circuits.
Notable high-performance implementations include Stim~\cite{gidney2021stim} and SymPhase~\cite{fang2024symphase}, which have extensive algorithmic and implementation-level optimizations.
However, these simulators cannot handle fault-tolerant circuits containing dozens of non-Clifford gates. 

To overcome this limitation, stabilizer decomposition attracted attention \cite{bravyi2016improved, bravyi2019simulation}, but these early works focused on computing a single output amplitude and paid relatively little attention to mid-circuit measurements. Recent studies have also proposed hybrid stabilizer–tensor-network approaches, combining the strengths of both methods~\cite{masot2024stn, nakhl2025stn}.
Nonetheless, existing implementations remain inefficient and lack a detailed structural analysis of fault-tolerant circuits. 

In \cite{rosenfeld2025magic}, the \texttt{qevol} simulator is introduced to simulate $d=5$ grafting experiments following a $d=3$ injection+cultivation stage. While their approach similarly employs a stabilizer decomposition method, it does not extend to the $d=5$ cultivation stage. The latter is computationally more demanding than the grafting (escape) stage, which consists exclusively of Clifford operations and is therefore significantly less expensive to simulate.


\section{Preliminaries}
\subsection{Clifford gates, $T$ gates, and Pauli measurements}
Let $\mathcal{P}_n=\{\pm 1,\pm i\}\{I,X,Y,Z\}^{\otimes n}$ be the $n$-qubit Pauli group.  
The $n$-qubit Clifford group is defined as
\[
\mathcal{C}_n \;=\; \{\,U \mid U \mathcal{P}_n U^\dagger = \mathcal{P}_n\,\}.
\]
In this work we use the standard generating set $\{H,S,\mathrm{CNOT}\}$ for Clifford circuits.

A universal gate set is obtained by adding a non-Clifford gate, most commonly the $\pi/8$ gate $T$ (and its inverse $T^\dagger$). 
Up to a global phase, $T$ and $T^{\dagger}$ are $Z$-rotation and admit the Clifford-linear form
\begin{equation}\label{eq:T_gate}
T \;=\; e^{i\pi/8}\!\left(\cos\frac{\pi}{8}\,I - i\sin\frac{\pi}{8}\,Z\right),
\end{equation}
\begin{equation}\label{eq:T_dagger}
T^\dagger \;=\; e^{-i\pi/8}\!\left(\cos\frac{\pi}{8}\,I + i\sin\frac{\pi}{8}\,Z\right),
\end{equation}

which will be useful for our update rules.

We consider (possibly mid-circuit) Pauli measurements of observables $P\in \mathcal{P}_n$ with outcomes $\pm1$, implemented by projectors
\[
\Pi_{\pm}=\frac{I\pm P}{2}.
\]
Such measurements are central to stabilizer-based error correction: the outcomes reveal syndromes while the post-measurement state remains efficiently representable in the stabilizer tableau formalism (see below).

\subsection{Stabilizer and destabilizer tableau}
The stabilizer formalism~\cite{gottesman1997stabilizer} provides an efficient representation of \emph{stabilizer states}. 
An $n$-qubit stabilizer state $|\psi_{\mathcal{S}}\rangle$ is the unique simultaneous $+1$ eigenstate of an Abelian subgroup $\mathcal{S}\subset \mathcal{P}_n$ that does not contain $-I$. 
Equivalently, $\mathcal{S}$ can be specified by $n$ independent commuting generators
\[
\mathcal{S}=\langle s_1,\dots,s_n\rangle,\qquad
s_i|\psi_{\mathcal{S}}\rangle=|\psi_{\mathcal{S}}\rangle \ \mathrm{for\ 1\leq i \leq n}.
\]
Under a Clifford gate $U\in\mathcal{C}_n$, stabilizers update by conjugation:
\[
s_i \;\mapsto\; s_i^{\prime}=U s_i U^\dagger, \ \mathrm{for\ 1\leq i \leq n}
\]
so Clifford circuits with Pauli measurements can be simulated with polynomial-time tableau updates.

In practice, it is convenient to maintain, in addition to stabilizers, a complementary set of \emph{destabilizers}~\cite{aaronson2004improved}
\[
\mathcal{D}=\{d_1,\dots,d_n\}\subset \mathcal{P}_n,
\]
chosen such that $d_i$ anticommutes with $s_i$ and commutes with $s_j$ for $j\neq i$:
\[
d_i s_j = (-1)^{\delta_{ij}} s_j d_i.
\]
The pair $(\mathcal{S},\mathcal{D})$ forms the stabilizer--destabilizer tableau, which supports efficient binary-symplectic update rules for both Clifford conjugation and Pauli measurements. 
This tableau structure underlies many modern stabilizer simulators, including Stim~\cite{gidney2021stim} and Symphase~\cite{fang2024symphase}.

\subsection{Generalized stabilizer tableau}
\label{subsec:gs_tableau}

The stabilizer formalism efficiently tracks Clifford dynamics, but it is not closed under non-Clifford gates such as $T$.
The \emph{generalized stabilizer formalism}~\cite{yoder2012generalization} extends the stabilizer--destabilizer tableau by attaching complex coefficients, yielding a compact representation for arbitrary states.

\paragraph{Basis induced by a tableau}
Let $(\mathcal{S},\mathcal{D})$ be a stabilizer--destabilizer tableau on $n$ qubits, with
\[
\mathcal{S}=\langle s_{1},\dots,s_{n}\rangle,\qquad
\mathcal{D}=\{d_{1},\dots,d_{n}\}.
\]
For each $\alpha=(\alpha_1,\dots,\alpha_n)\in\{0,1\}^n$, define the Pauli operator
\[
\mathbf{d}_\alpha := \prod_{k=1}^n \bigl(d_k\bigr)^{\alpha_k},
\qquad
|b_\alpha\rangle := \mathbf{d}_\alpha\,|\psi_{\mathcal{S}}\rangle,
\]
where $|\psi_{\mathcal{S}}\rangle$ is the stabilizer state stabilized by $\mathcal{S}$.
The set $B(\mathcal{S},\mathcal{D})=\{|b_\alpha\rangle:\alpha\in\{0,1\}^n\}$ forms an orthonormal basis of $\mathcal{H}_2^{\otimes n}$.

\begin{definition}[Generalized stabilizer representation~\cite{yoder2012generalization}]
\label{def:gsf}
A pure $n$-qubit state $|\phi\rangle$ is represented by a pair $(\mathbf{v},B(\mathcal{S},\mathcal{D}))$, where
\begin{equation}
\label{eq:gsf_pure}
|\phi\rangle \;=\; \sum_{\alpha\in\{0,1\}^n} v_\alpha\,|b_\alpha\rangle,
\qquad
\sum_\alpha |v_\alpha|^2 = 1.
\end{equation}
We denote by $|\mathbf{v}|$ the number of nonzero coefficients $v_\alpha$.
\end{definition}

\paragraph{Pure-state simulation model}
Throughout this work we simulate stochastic Pauli noise and mid-circuit measurements by Monte Carlo sampling: 
noise is applied by randomly sampling Pauli errors, and a Pauli measurement is simulated by sampling an outcome according to its branch probability and renormalizing the post-measurement state.
Hence each shot maintains a \emph{pure} state, and we only use the pure-state representation in Equation~\eqref{eq:gsf_pure}.

\paragraph{Storage cost}
The tableau $(\mathcal{S},\mathcal{D})$ requires $O(n^2)$ bits in binary symplectic form.
The coefficient vector $\mathbf{v}$ lives in dimension $2^n$, but in practice we store it sparsely with $|\mathbf{v}|$ nonzero entries.

\paragraph{A useful identity}
For any Pauli $Q\in\mathcal{P}_n$, its action on the basis is a permutation up to phase:
\begin{equation}
\label{eq:pauli_on_basis}
Q\,|b_\alpha\rangle \;=\; \xi_{\alpha}(Q)\,|b_{\alpha\oplus \mathbf{\beta}(Q)}\rangle,
\qquad \xi_{\alpha}(Q)\in\{\pm 1,\pm i\},
\end{equation}
where the \emph{index shift} $\mathbf{\beta}(Q)\in\{0,1\}^n$ is defined by
\[
\beta_i(Q)=1 \iff Q \text{ anticommutes with } s_i.
\]
Given the tableau, $\beta(Q)$ can be computed in $O(n)$ time.

\begin{theorem}[Update cost and sparsity bound~\cite{yoder2012generalization}]
\label{thm:complexity}
Let $|\phi\rangle$ be a pure $n$-qubit state represented as $(\mathbf{v},B(\mathcal{S},\mathcal{D}))$, and let $|\mathbf{v}|$ be the number of nonzero coefficients.
For a single operation, the update cost and the change in $|\mathbf{v}|$ are:
\begin{enumerate}
    \item \textbf{Clifford gate $U\in\mathcal{C}_n$:} update cost $O(n)$ and $|\mathbf{v}'|=|\mathbf{v}|$.
    \item \textbf{Pauli measurement of $P\in\mathcal{P}_n$:} update cost $O(|\mathbf{v}|\,n+n^2)$ and $|\mathbf{v}'|\le |\mathbf{v}|$.
    \item \textbf{$T_q$ or $T_q^\dagger$ on qubit $q$:} update cost $O(|\mathbf{v}|\,n+n^2)$ and $|\mathbf{v}'|\le 2|\mathbf{v}|$.
\end{enumerate}
\end{theorem}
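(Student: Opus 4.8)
The plan is to prove each of the three cases separately, treating the Clifford conjugation, Pauli measurement, and $T$-gate updates in turn, using the basis-permutation identity in Equation~\eqref{eq:pauli_on_basis} as the central workhorse throughout.

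\textbf{Case 1 (Clifford gate).} First I would observe that a Clifford gate $U$ acts by conjugation on the tableau: $s_i \mapsto U s_i U^\dagger$ and $d_i \mapsto U d_i U^\dagger$, and simultaneously on the state $|\phi\rangle \mapsto U|\phi\rangle$. The key point is that $U$ maps the old basis to the new one without mixing coefficients: since $|b_\alpha\rangle = \mathbf{d}_\alpha |\psi_{\mathcal{S}}\rangle$ and $U|\psi_{\mathcal{S}}\rangle$ is stabilized by $\langle U s_1 U^\dagger,\dots\rangle$, we get $U|b_\alpha\rangle = (U\mathbf{d}_\alpha U^\dagger)(U|\psi_{\mathcal{S}}\rangle) = |b'_\alpha\rangle$, the corresponding basis vector of the \emph{updated} tableau. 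Hence $U|\phi\rangle = \sum_\alpha v_\alpha |b'_\alpha\rangle$, so the coefficient vector is literally unchanged and $|\mathbf{v}'|=|\mathbf{v}|$. The $O(n)$ cost comes from the standard binary-symplectic update of the generating Clifford $\{H,S,\mathrm{CNOT}\}$, each of which touches $O(1)$ generators with $O(n)$-bit rows.

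\textbf{Case 2 (Pauli measurement).} I would expand the projector $\Pi_{\pm} = \tfrac{1}{2}(I \pm P)$ and apply Equation~\eqref{eq:pauli_on_basis}: $P|b_\alpha\rangle = \xi_\alpha(P)\,|b_{\alpha\oplus\beta(P)}\rangle$. Computing $\beta(P)$ takes $O(n)$ per coefficient, or $O(|\mathbf{v}|\,n)$ overall. There are two subcases depending on whether $P$ commutes with every stabilizer, i.e.\ whether $\beta(P)=\mathbf{0}$. If $\beta(P)=\mathbf{0}$ then $P$ is already diagonal in the basis and measurement simply keeps or zeroes coefficients according to $\xi_\alpha(P)=\pm1$, so $|\mathbf{v}'|\le|\mathbf{v}|$. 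If $\beta(P)\neq\mathbf{0}$, then $P$ pairs up indices $\alpha$ and $\alpha\oplus\beta(P)$; applying $\tfrac{1}{2}(I\pm P)$ combines each such pair into a single surviving coefficient, which again yields $|\mathbf{v}'|\le|\mathbf{v}|$ after renormalization. In this second case one must also update the tableau so that $P$ becomes a stabilizer, costing the extra $O(n^2)$ for the generator row-reduction, matching the stated bound.

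\textbf{Case 3 ($T$ gate).} This is the only non-Clifford step and I expect it to be the main obstacle. I would use the Clifford-linear form of Equation~\eqref{eq:T_gate}, writing $T_q = e^{i\pi/8}(\cos\tfrac{\pi}{8}\,I - i\sin\tfrac{\pi}{8}\,Z_q)$, so that $T_q$ is a \emph{two-term} operator. Applying $Z_q$ to each basis vector via Equation~\eqref{eq:pauli_on_basis} gives $Z_q|b_\alpha\rangle = \xi_\alpha(Z_q)\,|b_{\alpha\oplus\beta(Z_q)}\rangle$, so $T_q|b_\alpha\rangle$ is a linear combination of $|b_\alpha\rangle$ and $|b_{\alpha\oplus\beta(Z_q)}\rangle$. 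Summing over the $|\mathbf{v}|$ nonzero coefficients, the image $T_q|\phi\rangle$ is supported on at most $2|\mathbf{v}|$ basis vectors, giving $|\mathbf{v}'|\le 2|\mathbf{v}|$; the doubling comes precisely from the two-term structure, and collisions (when $\alpha$ and $\alpha\oplus\beta(Z_q)$ are both already in the support) can only reduce this, so the bound is tight in the worst case. The cost is $O(|\mathbf{v}|\,n)$ to compute all the shifts and phases plus an $O(n^2)$ overhead if the representation requires a hash/merge of colliding indices or a tableau normalization; the main subtlety to verify carefully is that $\beta(Z_q)$ is computed once in $O(n)$ and reused, and that the merge step keeps the total within the claimed budget rather than introducing an extra $\log$ factor.
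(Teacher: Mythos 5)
Your proposal is correct and follows essentially the same route as the paper's proof: Clifford conjugation carries the basis along so $\mathbf{v}$ is untouched, the Pauli measurement uses the identity $P|b_\alpha\rangle=\xi_\alpha(P)|b_{\alpha\oplus\beta(P)}\rangle$ with an $O(n^2)$ tableau pivot in the non-deterministic case, and the $T$ gate uses the two-term form $aI+bZ_q$ to get the factor-of-two bound at $O(n)$ per nonzero entry. Your ``pair-and-combine'' phrasing of the measurement update is just the dual description of the paper's ``pivot-then-filter'' step (the combined coefficient $v_\alpha\pm\xi\,v_{\alpha\oplus\beta}$ is exactly the entry that survives filtering in the pivoted basis), so the two arguments coincide.
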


\begin{proof}
We sketch the key points; all costs are in binary operations on the tableau plus arithmetic on the nonzero entries of $\mathbf{v}$.

\textbf{(1) Clifford gates.}
For $U\in\mathcal{C}_n$, the tableau updates by conjugation
$(\mathcal{S},\mathcal{D})\mapsto(U\mathcal{S}U^\dagger,U\mathcal{D}U^\dagger)$,
which costs $O(n)$ for the standard generators (each gate touches $O(1)$ qubits and updates all $2n$ rows).
If we update the basis accordingly, then $U|b_\alpha\rangle=|b'_\alpha\rangle$ for all $\alpha$, so the coefficient vector is unchanged and $|\mathbf{v}'|=|\mathbf{v}|$.

\textbf{(2) Pauli measurements.}
Measuring $P$ uses projectors $\Pi_{\pm}=(I\pm P)/2$.
Using~\eqref{eq:pauli_on_basis}, for each nonzero $v_\alpha$ we can determine in $O(n)$ time whether $|b_\alpha\rangle$ lies in the sampled measurement branch (equivalently, whether it has the correct eigenvalue after the standard tableau pivot). 
Thus we filter (and renormalize) the existing nonzero entries without creating new ones, giving $|\mathbf{v}'|\le|\mathbf{v}|$ and $O(|\mathbf{v}|\,n)$ work.
The tableau update for a non-deterministic Pauli measurement requires a pivot/elimination step costing $O(n^2)$ in the Aaronson--Gottesman tableau algorithm~\cite{aaronson2004improved}.
Overall: $O(|\mathbf{v}|\,n+n^2)$.

\textbf{(3) $T$ and $T^\dagger$.}
Using the decomposition in Equation~(\ref{eq:T_gate}) and~(\ref{eq:T_dagger}), each basis vector maps to at most two basis vectors by~\eqref{eq:pauli_on_basis} (with $Q=Z_q$):
\[
T_q|b_\alpha\rangle = a\,|b_\alpha\rangle + b\,\xi_{\alpha}(Z_q)\,|b_{\alpha\oplus\mathbf{\beta}(Z_q)}\rangle,
\]
and similarly for $T_q^\dagger$.
Hence each nonzero coefficient contributes to at most two outputs, so $|\mathbf{v}'|\le 2|\mathbf{v}|$.
Computing the target index $\alpha\oplus\beta(Z_q)$ and accumulating amplitudes costs $O(n)$ per nonzero entry, giving $O(|\mathbf{v}|\,n)$.
In addition, we may apply an $O(n^2)$ re-gauging/canonicalization step on the tableau (row/column operations) when required by the implementation.
Overall: $O(|\mathbf{v}|\,n+n^2)$.
\end{proof}

\section{Magic State Cultivation}

In this section, we review the magic state cultivation
protocol~\cite{gidney2024magic} with modifications. Their scheme prepares the $|T\rangle_L$ state on a triangular color code and uses the transversal $H_{XY}$ gate to perform logical checks that progressively increase the fault distance. The preparation procedure consists of three stages: injection, cultivation, and escape.
During the injection and cultivation stages, there are Clifford gates, non-Clifford gates, Pauli measurements and detector parity checkers. See Figure \ref{fig1} for a part of the $d=3$ MSC circuit.
In \cite{gidney2024magic}, they performed state-vector simulations for the $d=3$ injection and cultivation stages, but substituted all $T (T^\dagger)$ gates with $S (S^\dagger)$ to enable a Clifford-only proxy simulation for the $d=5$ case. 
\begin{figure*}
      \centering
      \includegraphics[width=0.95\linewidth]{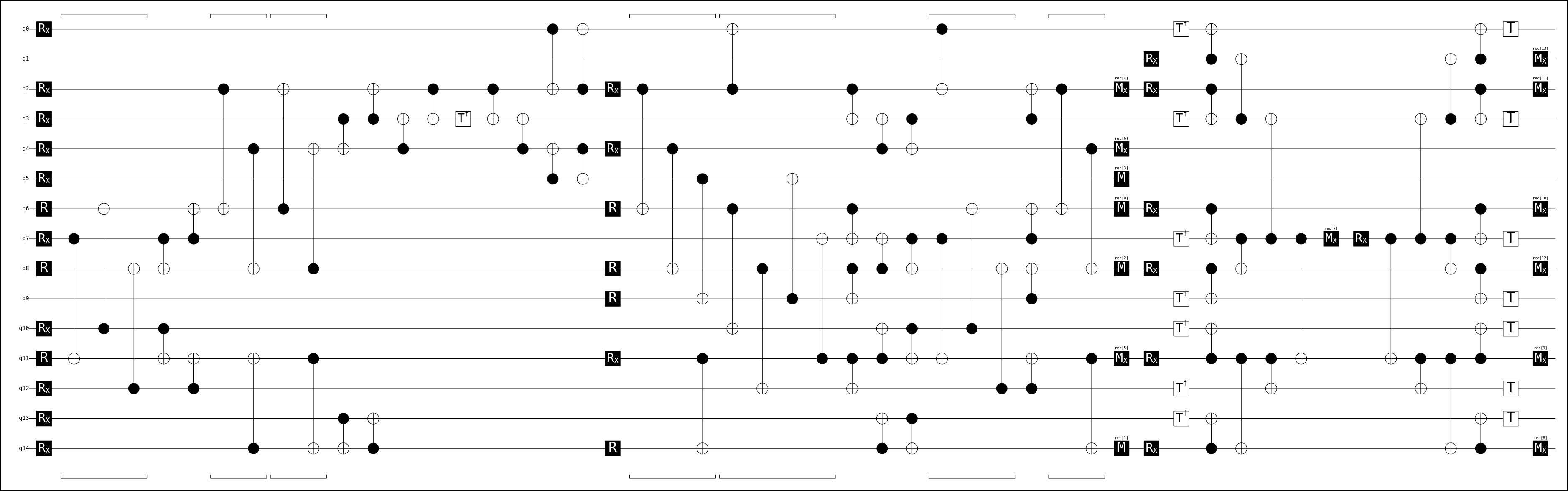}
      \caption{$d=3$ MSC circuit excluding the noise channels and the final Pauli product measurements.}
      \label{fig1}
\end{figure*}
We make several modifications to the $d=5$ injection and cultivation circuit to correct the errors noted in Section~2.6 of the original paper. Throughout the circuit, we enforce that the state remains in the $+1$ eigenstate of all 4-body and 6-body stabilizers of the color code. To ensure this, we apply the appropriate Pauli corrections after growing the code from $d=3$ to $d=5$. Assuming that the code grows along the red boundary, we initialize Bell pairs on all red edges outside the original $d=3$ patch. During the subsequent round of the superdense error-correction cycle, all ancilla qubits within red tiles will produce random measurement outcomes. Due to the structure of the superdense cycle, this implies that the eigenvalues of the $X$ stabilizers on the newly grown red tiles, as well as the $Z$ stabilizers on the grown blue or green tiles that vertically border these red tiles, will be random. In our simulations, we apply Pauli $X$ and $Z$ corrections conditioned on these measurement results to restore the stabilizers to their $+1$ eigenstate. Some of these Pauli corrections flip the logical check values during cultivation, and we account for this effect in the logical check detector definition. The transversal logical $H_{XY}$ gate is now implemented as physical $H_{XY}$ and $H_{NXY}$ gates on the data qubits. Consequently, when conjugating the $H_{XY}$ check into an $X$ check during cultivation, certain $T \,(T^\dagger)$ rotations must be replaced with $T^\dagger \,(T)$ gates.

For the noisy simulation, we use the \emph{uniform depolarizing} noise model, which inserts depolarizing noise after all physical gates, during qubit idling, and applies flip noise before measurements and after resets. All noise processes are parameterized by the uniform noise strength $p$.

The $d=5$ injection and cultivation circuit has 42 qubits and 72 non-Clifford ($T$ or $T^\dagger$) gates, as detailed in Table \ref{tab2}. Note that we do not simulate the escape stage of the protocol; however, it should be straightforward to include in future work, as all operations during the escape stage are Clifford operations and classical procedures such as complementary gap decoding.

\definecolor{zebraodd}{RGB}{255, 255, 255}
\definecolor{zebraeven}{RGB}{245, 245, 245}

\begin{table}[ht]
    \centering
    \caption{Circuit Statistics for Magic State Cultivation}
    \label{tab2}
    \small
    \renewcommand{\arraystretch}{1.3}
    
    \rowcolors{2}{zebraeven}{zebraodd}
    
    \begin{tabular}{lrr}
        \toprule
        \textbf{Metric} & \textbf{$d=3$} & \textbf{$d=5$} \\
        \midrule
        Total Qubits                         & 15       & 42       \\
        Total Gates                          & 137      & 741      \\
        Circuit Depth                        & 39       & 94       \\
        Two-Qubit Gates                      & 81       & 477      \\
        Measurements                         & 14       & 93       \\
        $T$/$T^\dagger$ Count                & 22       & 72       \\
        $T$/$T^\dagger$ Support Size         & 7        & 19       \\
        $T$/$T^\dagger$ Depth                & 4        & 6        \\
        \bottomrule
    \end{tabular}
\end{table}

\section{Method and Implementation}
\subsection{Analysis of MSC Circuit Structure}
The analysis in Section~\ref{subsec:gs_tableau} indicates that, to simulate a generic Clifford+$T$ circuit using generalized stabilizer representation $(\mathbf{v},B(\mathcal{S},\mathcal{D}))$, the number of coefficients $|v|$ in the representation can grow exponentially with the number of $T$ gates, and in the worst case it can reach $\mathcal{O}(2^n)$.\footnote{More precisely, one typically expects an upper bound of the form $\mathcal{O}(2^{\min(t,n)})$, where $t$ is the $T$-count.} If such worst-case behavior occurred for the $d=5$ MSC circuit, then \textsc{SOFT} would not be able to simulate it efficiently nor generate a large number of samples.

However, we obtain the following observations:

\begin{enumerate}
    \item In the $d=5$ MSC circuit, all $T/T^\dagger$ gates act on only $19$ qubits. More specifically, these are exactly the $19$ data qubits of the $[[19,1,5]]$ color code.
    \item The circuit contains only six layers of $T/T^\dagger$ gates, and between any two consecutive layers there are many rounds of measurement and reset operations that reinitialize a subset of qubits (typically ancillas).
    \item If $Z_q$ is contained in the stabilizer group of the current stabilizer component, i.e.,
    $Z_q\ket{\Psi_\mathcal{S}}=\pm\ket{\Psi_\mathcal{S}}$,
    then applying $T$ or $T^\dagger$ on qubit $q$ does not increase $|\mathbf{v}|$ (it only contributes an overall phase). 
\end{enumerate}

Observation~(1) and~(2) follow directly from Table~\ref{tab2}, while Observation~(3) follows from Eq.~(4).
We now leverage these observations to derive a deeper insight.

\paragraph{Color-code structure and $Z$-constraints}
Observation~(1) implies that all non-Clifford gates act exclusively on the $19$ data qubits of a $[[19,1,5]]$ color code block.
Since this is a CSS stabilizer code, its stabilizer generators can be chosen as $r_X$ $X$-type checks and $r_Z$ $Z$-type checks, with
\begin{equation}
r_X + r_Z = n - k,
\end{equation}
where $(n,k)=(19,1)$. For the triangular distance-$5$ color code used in MSC, one may take $r_X=r_Z=9$, hence the data block carries $r_Z=9$ independent $Z$-type parity constraints.

\paragraph{A coset bound for diagonal $T$-layers}
Let $Q$ denote the set of qubits on which $T/T^\dagger$ may act, and let
$G_Q := \langle Z_q : q\in Q\rangle \cong (\mathbb{Z}_2)^{|Q|}$
be the group of $Z$-Pauli strings supported on $Q$.
For a given stabilizer component $\ket{\Psi_S}$ appearing in the generalized tableau, define its $Z$-type stabilizer subgroup on $Q$ by
\begin{equation}
H_Q(\Psi_S) := \mathrm{Stab}(\Psi_S)\cap G_Q,
\quad r_Q := \mathrm{rank}\bigl(H_Q(\Psi_S)\bigr).
\end{equation}
where $\mathrm{Stab}(\Psi_S)$ denotes the stabilizer group of the stabilizer component $\ket{\Psi_S}$.

\begin{proposition}[Coset bound for a $T$-layer]\label{prop:coset_bound}
Consider a layer of diagonal $\pi/4$ gates supported on $Q$,
$U := \bigotimes_{q\in Q} T_q^{\pm 1}$.
Then $U\ket{\Psi_S}$ admits a decomposition into at most
\begin{equation}
|G_Q/H_Q(\Psi_S)| = 2^{|Q|-r_Q}
\end{equation}
distinct stabilizer components (up to global phases). In particular, the generalized stabilizer tableau can be arranged so that the number of nonzero coefficients satisfies $|\mathbf{v}|\le 2^{|Q|-r_Q}$ immediately after applying $U$.
\end{proposition}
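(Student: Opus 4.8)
The plan is to expand the diagonal layer into a Pauli sum over pure $Z$-strings, route each resulting term onto a single tableau basis vector through the identity in Equation~\eqref{eq:pauli_on_basis}, and then count how many distinct basis labels can survive. First I would use the Clifford-linear forms~\eqref{eq:T_gate}--\eqref{eq:T_dagger}: up to a global phase each factor satisfies $T_q^{\pm1}\propto \cos\frac{\pi}{8}\,I \mp i\sin\frac{\pi}{8}\,Z_q$, and since both coefficients are nonzero, expanding the tensor product gives
\[
U \;=\; \gamma\!\!\sum_{A\subseteq Q} c_A\,\mathbf{Z}_A, \qquad \mathbf{Z}_A:=\prod_{q\in A}Z_q\in G_Q,\quad c_A\neq 0,
\]
with $\gamma$ a global phase. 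Writing $\ket{\Psi_S}=\ket{b_0}$ in the current tableau and applying Equation~\eqref{eq:pauli_on_basis} to the Pauli $\mathbf{Z}_A$, each term lands on a single basis vector,
\[
\mathbf{Z}_A\ket{\Psi_S} \;=\; \xi_0(\mathbf{Z}_A)\,\ket{b_{\beta(\mathbf{Z}_A)}},
\]
so that $U\ket{\Psi_S}=\sum_{\alpha}v_\alpha\ket{b_\alpha}$, where $v_\alpha$ collects the phased coefficients $\gamma\,c_A\,\xi_0(\mathbf{Z}_A)$ over all $A$ with $\beta(\mathbf{Z}_A)=\alpha$.

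The crux is to bound the number of distinct labels $\alpha=\beta(\mathbf{Z}_A)$. I would observe that the restriction of $\beta$ to $G_Q$ is $\mathbb{Z}_2$-linear, so $\beta(\mathbf{Z}_A)=\beta(\mathbf{Z}_{A'})$ holds iff $\beta(\mathbf{Z}_{A\triangle A'})=0$, i.e.\ iff the $Z$-string $\mathbf{Z}_{A\triangle A'}=\mathbf{Z}_A\mathbf{Z}_{A'}$ commutes with every stabilizer generator $s_i$. Because the stabilizer group of a pure state is maximal, its centralizer in $\mathcal{P}_n$ is that group up to phase; hence a $Z$-string commutes with all $s_i$ exactly when it stabilizes $\ket{\Psi_S}$ up to sign. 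Collecting these ``trivial'' strings into the subgroup $\widetilde{H}_Q:=\{\mathbf{Z}_C\in G_Q:\ \pm\mathbf{Z}_C\in\mathrm{Stab}(\Psi_S)\}$, this subgroup is precisely the kernel of $\beta|_{G_Q}$, so the fibers of $A\mapsto\beta(\mathbf{Z}_A)$ are exactly the cosets of $\widetilde{H}_Q$ in $G_Q$. Therefore the number of distinct surviving labels, and hence the number of distinct stabilizer components, is at most the index $[\,G_Q:\widetilde{H}_Q\,]$.

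To match the statement, I would compare $\widetilde{H}_Q$ with $H_Q(\Psi_S)=\mathrm{Stab}(\Psi_S)\cap G_Q$, the $Z$-strings stabilizing $\ket{\Psi_S}$ with eigenvalue exactly $+1$. Since $H_Q(\Psi_S)\subseteq\widetilde{H}_Q$, we have $r_Q=\mathrm{rank}(H_Q)\le\mathrm{rank}(\widetilde{H}_Q)$, so
\[
\#\{\text{distinct components}\}\;\le\;[\,G_Q:\widetilde{H}_Q\,]\;=\;2^{\,|Q|-\mathrm{rank}(\widetilde{H}_Q)}\;\le\;2^{\,|Q|-r_Q},
\]
and the right-hand side equals $|G_Q/H_Q(\Psi_S)|$. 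Since at most this many $v_\alpha$ are nonzero, $|\mathbf{v}|\le 2^{\,|Q|-r_Q}$ immediately after applying $U$. The step I expect to be most delicate is the phase and sign bookkeeping in the middle paragraph: I must verify that the centralizer of a maximal stabilizer group really collapses the label map onto cosets of $\widetilde{H}_Q$, and confirm that replacing $\widetilde{H}_Q$ by its (possibly index-$2$ smaller) subgroup $H_Q$---which differ only when some $Z$-string stabilizes with eigenvalue $-1$---only weakens the index and therefore preserves the inequality.
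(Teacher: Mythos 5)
Your proposal is correct and takes essentially the same route as the paper's proof: you expand the diagonal layer as a Pauli sum $U=\sum_{g\in G_Q}c_g\,g$ with all $c_g\neq 0$ and bound the number of distinct components by counting cosets of a $Z$-type subgroup of $G_Q$ that acts trivially (up to phase) on $\ket{\Psi_S}$. Your only addition is identifying the exact kernel $\widetilde{H}_Q$ of the $\mathbb{Z}_2$-linear label map $\beta|_{G_Q}$ (strings stabilizing up to sign) and observing that replacing it by the possibly smaller $H_Q(\Psi_S)$ only weakens the index --- a harmless, slightly sharper bookkeeping of the same coset argument.
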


\begin{proof}[Proof sketch]
Using Eq.~(4), write $T_q^{\pm 1}=\alpha I+\beta Z_q$ with $\alpha,\beta\neq 0$.
Expanding the product yields
\[
U = \sum_{g\in G_Q} c_g\, g,
\qquad\text{hence}\qquad
U\ket{\Psi_S}=\sum_{g\in G_Q} c_g\, g\ket{\Psi_S}.
\]
If $g_1^{-1}g_2\in H_Q(\Psi_S)$ then $g_1\ket{\Psi_S}$ and $g_2\ket{\Psi_S}$ differ only by a phase because $H_Q(\Psi_S)\subseteq \mathrm{Stab}(\Psi_S)$.
Therefore distinct stabilizer components are indexed by cosets in $G_Q/H_Q(\Psi_S)$, whose cardinality is $2^{|Q|-r_Q}$.
\end{proof}

Proposition~\ref{prop:coset_bound} formalizes Observation~(3): if $Z_q\in \mathrm{Stab}(\Psi_S)$ (equivalently $Z_q\ket{\Psi_S}=\pm\ket{\Psi_S}$), then applying $T_q$ or $T_q^\dagger$ does not branch the decomposition and hence does not increase $|\mathbf{v}|$.

\paragraph{Why interleaving CNOT and $Z$-measurements does not violate the bound}
In the MSC construction, the measurement-and-reset subroutines between consecutive $T/T^\dagger$ layers are designed to repeatedly re-establish the code constraints on the data block (while extracting syndrome information and flag checks).
Consequently, before each $T/T^\dagger$ layer, every stabilizer component $\ket{\Psi_S}$ in the generalized tableau satisfies the $r_Z=9$ (for $d=5$ case) independent $Z$-type checks of the $[[19,1,5]]$ block (with signs determined by the measurement record), implying $r_Q \ge 9$ for $Q$ being the $19$ data qubits. For $d=3$ case, $|Q|=7$ and $r_Z = 3$.

The operations between $T/T^\dagger$ layers are Clifford gates (notably CNOT) together with $Z$-basis measurements and resets on ancillas.
Clifford gates map stabilizer states to stabilizer states bijectively and therefore permute the stabilizer components without increasing their count; in terms of the coefficient vector, they apply a relabeling (and possibly phase updates) but do not introduce new cosets beyond $G_Q/H_Q$.
$Z$-basis measurements act as projections: they can only remove components inconsistent with the observed outcome or merge components that become equivalent after enforcing the measurement constraint, and thus can only decrease (never increase) $|\mathbf{v}|$.

With these analysis, we obtain the following theorem:
\begin{theorem}[Upper bound of $|\mathbf{v}|$]\label{thm:coff_upper_bound}
    Consider simulating a magic-state cultivation (MSC) circuit using the generalized stabilizer tableau formalism for pure states. At every time step of the MSC simulation, the number of nonzero coefficients satisfies:
\begin{enumerate}
    \item {For code distance $d=3$,} 
    \[
    |\mathbf{v}| \;\le\; 2^{|Q|-r_Z} \;=\; 2^{7-3} \;=\; 16.
    \]
    \item {For code distance $d=5$,}
    \[
    |\mathbf{v}| \;\le\; 2^{|Q|-r_Z} \;=\; 2^{19-9} \;=\; 1024.
    \]
\end{enumerate}
\end{theorem}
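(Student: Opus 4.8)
The plan is to derive Theorem~\ref{thm:coff_upper_bound} as a direct corollary of Proposition~\ref{prop:coset_bound} combined with the interleaving argument already sketched in the preceding paragraphs. The key structural fact is that the MSC circuit consists of alternating $T/T^\dagger$ layers and Clifford-plus-$Z$-measurement subroutines, and that the latter re-establish the $Z$-type code constraints before each $T$-layer. I would therefore organize the proof as an induction over the sequence of operations in the circuit, tracking the bound on $|\mathbf{v}|$ across each of the three operation types identified earlier: $T/T^\dagger$ layers, Clifford gates, and $Z$-basis measurements and resets.

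First I would fix $Q$ to be the set of $19$ data qubits (respectively $7$ for $d=3$) on which all $T/T^\dagger$ gates act, as guaranteed by Observation~(1). The crucial input is that, immediately before any $T/T^\dagger$ layer, every stabilizer component $\ket{\Psi_S}$ in the generalized tableau satisfies all $r_Z=9$ independent $Z$-type checks of the $[[19,1,5]]$ block (with signs fixed by the measurement record). This means $H_Q(\Psi_S)$ contains these $r_Z$ independent $Z$-strings, so $r_Q \ge r_Z$. Applying Proposition~\ref{prop:coset_bound} with this lower bound on $r_Q$ gives $|\mathbf{v}| \le 2^{|Q|-r_Q} \le 2^{|Q|-r_Z}$ immediately after the $T$-layer, which evaluates to $2^{7-3}=16$ and $2^{19-9}=1024$ respectively.

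Next I would close the induction by arguing that the bound is preserved between $T$-layers. For Clifford gates, Theorem~\ref{thm:complexity}(1) gives $|\mathbf{v}'|=|\mathbf{v}|$, so they never increase the count; they merely relabel and rephase the stabilizer components. For $Z$-basis measurements (and resets, which can be modeled as measurement followed by a conditional Pauli) Theorem~\ref{thm:complexity}(2) gives $|\mathbf{v}'|\le|\mathbf{v}|$, since projection onto a measurement branch can only delete or merge existing components. Hence along any stretch of Clifford gates and $Z$-measurements the bound established at the preceding $T$-layer is maintained. Since the state is initialized as a stabilizer state with $|\mathbf{v}|=1$ and the first $T$-layer is itself governed by Proposition~\ref{prop:coset_bound}, the bound $|\mathbf{v}|\le 2^{|Q|-r_Z}$ holds at every time step, not merely immediately after each $T$-layer.

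The main obstacle is justifying that $r_Q\ge r_Z$ genuinely holds at \emph{every} time step rather than only at the idealized moments the measurement subroutines are designed to target. In principle, partway through a measurement-and-reset subroutine some of the $Z$-type code checks may be temporarily violated or not yet re-established, so one must verify that no $T$-layer is ever applied while the $Z$-constraints are incomplete, and that the intermediate steps within a subroutine still only decrease $|\mathbf{v}|$. I would address this by appealing to the structure of the superdense error-correction cycle described in Section~III: the $T/T^\dagger$ layers are separated by complete rounds of syndrome extraction that deterministically fix the $r_Z$ independent $Z$-checks, so the hypothesis of Proposition~\ref{prop:coset_bound} is satisfied exactly at the layer boundaries, while the monotonicity of $|\mathbf{v}|$ under Clifford and $Z$-measurement operations controls the behavior in between. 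This reduces the theorem to the per-layer coset bound plus monotonicity, both of which are already in hand.
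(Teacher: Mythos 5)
Your proposal matches the paper's own argument essentially step for step: the paper likewise obtains Theorem~\ref{thm:coff_upper_bound} by combining Proposition~\ref{prop:coset_bound} (with $r_Q \ge r_Z$ guaranteed because the syndrome-extraction subroutines re-establish the $r_Z$ independent $Z$-type checks before each $T/T^\dagger$ layer) with the monotonicity of $|\mathbf{v}|$ under Clifford gates and $Z$-basis measurements/resets, exactly as in your induction. Your closing remark about verifying that no $T$-layer fires while the $Z$-constraints are incomplete is the same structural appeal the paper makes informally in its ``interleaving'' paragraph, so the proposal is correct and takes the same route.
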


Theorem~\ref{thm:coff_upper_bound} agrees perfectly with our experimental data.
In the distance-$5$ simulations, the memory required to store the generalized stabilizer tableau for a single shot is at most $200~\mathrm{KB}$.
Given that the H800 GPU provides $80~\mathrm{GB}$ of device memory, in theory no fewer than $400{,}000$ shots can reside concurrently on a single GPU, which enables a high degree of shot-level parallelism in the simulation.

\subsection{Parallel Implementation on GPU}
\label{subsec:gpu_impl}

To efficiently simulate magic state cultivation circuits, we implement the above generalized stabilizer tableau algorithm on modern GPUs.
Our CUDA backend follows a shot-parallel design: for a fixed circuit, we execute many independent shots concurrently on the GPU, and each shot maintains its own generalized stabilizer state $(\mathbf{v},B(\mathcal{S},\mathcal{D}))$ with no inter-shot communication.
All per-shot states remain resident in device memory throughout the simulation, and the host only reads back small aggregated statistics or error flags when necessary.

\paragraph{Memory layout}
For each shot, device memory is partitioned into four regions.
(1) \texttt{table} stores the binary symplectic stabilizer--destabilizer tableau for $(\mathcal{S},\mathcal{D})$.
(2) \texttt{entries} stores the sparse coefficient vector $\mathbf{v}$ as a list of nonzero pairs $(\alpha, v_\alpha)$, where $\alpha\in\{0,1\}^n$ indexes the basis states $|b_\alpha\rangle=d_\alpha|\psi_{\mathcal{S}}\rangle$.
(3) \texttt{decomp} provides temporary workspace for the vector updates (Pauli measurements and $T/T^\dagger$), e.g., intermediate indices, phases, and compaction buffers.
(4) \texttt{memory} stores classical per-shot data such as measurement outcomes, sampled branch probabilities, random number generator~(RNG) state, and error codes.
All buffers are allocated \emph{statically} before execution to avoid dynamic allocation overhead on the GPU; \texttt{entries} and \texttt{memory} may grow and shrink logically, but are implemented with fixed capacities sized to accommodate the entire circuit.
If a shot exceeds its allocated capacity (e.g., due to an unexpectedly large $|\mathbf{v}|$), we mark it as an overflow and optionally rerun it under a larger configuration.

\paragraph{Parallelization strategy}
We exploit fine-grained parallelism by mapping different operations to different kernel granularities.
For Clifford gates, tableau updates touch all $2n$ generators, so we parallelize over the Cartesian product of shots and tableau rows.
For Pauli measurements and $T/T^\dagger$ gates, the dominant work scales with $|\mathbf{v}|$, and we therefore implement these updates as short kernel pipelines: some stages launch one thread per nonzero entry, while others launch one thread per shot for normalization and bookkeeping.
All randomness is generated independently per shot using per-shot RNG instances and seeds.

\paragraph{Random seed generation}
To ensure that random seeds do not collide across the massive number of shots, we generate per-shot seeds using a SHA-1 based hashing scheme. 
This construction provides a large effective seed space and makes repeated seeds extremely unlikely, which is important for statistically sound Monte Carlo estimation over billions of shots.

\section{Results and Analysis}
\label{sec:results}
We have simulated more than 200 billion shots in total for different noise cases. The simulator runs on a small GPU cluster equipped with 16 NVIDIA H800 GPUs. All experiments were completed within 20 days of wall-clock time.

Our simulator is now open source, and readers can find the project on GitHub~\cite{SOFT2025} to reproduce all results reported in this paper. To minimize the impact of floating-point error, all experiments in this section use double-precision floating-point arithmetic.

\subsection{Simulation Results of Magic State Cultivation}
\label{subsec:msc_effectiveness}

For the $d=3$ case, the entire circuit can be simulated using a straightforward state-vector simulator~\cite{gidney2024magic}, we reproduce the $d=3$ results without further discussion here. We focus on the more challenging $d=5$ case and report results under several different physical noise levels, which cannot be achieved by existing open-source simulators.

Table~\ref{tab3} summarizes the performance of magic state cultivation for $d=5$ under different physical noise strengths $p$.
For each $p$, we report the discard rate (fraction of shots rejected by the detectors), the logical error rate after postselection and the total number of shots. Figure \ref{fig2} displays the logical error rates along with their statistical uncertainties.

\begin{table*}[t]
    \centering
    \caption{Performance of magic state cultivation for distance $d=5$ under different physical noise levels $p$ (1 B = 1 billion).}
    \label{tab3}
    \begin{tabular}{c!{\vrule width 1.2pt}c|c|c|c|c}
        \hline
        Noise Strength  & Total Shots & Preserved Shots & Discard Rate & Detected Logical Errors  & Logical Error Rate \\
        \Xhline{1.5pt}
        0.002  & 28.9B  & 0.60B  & 97.92\% & 22 & $3.41\times 10^{-8}$ \\
        \hline
        0.001  & 74.0B  & 10.6B & 85.60\% & 49 & $4.59\times 10^{-9}$  \\
        \hline
        0.0005 & 134.4B & 50.9B & 62.10\% & 8 & $1.57\times 10^{-10}$  \\
        \hline
    \end{tabular}
\end{table*}

We now highlight several results and observations:

\begin{enumerate}
    \item Since the structure of detection regions within the protocol is independent of whether the $|T\rangle$ or $|Y\rangle$ state is cultivated, the discard rates should remain invariant. Our numerical simulations corroborate this behavior. For $p \ge 0.003$, the discard rate exceeds $99\%$. In this regime, we consider magic state cultivation to be impractical. Consequently, we did not conduct additional simulations to estimate the logical error rate.
    \item At $p=1 \times 10^{-3}$, the logical error rate after postselection is approximately $4.59 \times 10^{-9}$, which is roughly $7.7$ times the conjectured value reported in~\cite{gidney2024magic}. For $p=5 \times 10^{-4}$, the logical error rate is $1.5 \times 10^{-10}$, or approximately $7.9$ times the conjectured value. These results demonstrate that the performance gap introduced by the Clifford proxy method scales with circuit size (from $d=3$ to $d=5$). This variation highlights the necessity of simulation tools capable of handling large circuits to rigorously validate the protocol's effectiveness.
    \item End-to-end simulations using a Clifford proxy for $d=5$ magic state cultivation yielded estimated logical error rates of $2 \times 10^{-9}$ and $4 \times 10^{-11}$ for physical error rates $p=1 \times 10^{-3}$ and $p=5 \times 10^{-4}$, respectively. Our results demonstrate that the first two stages alone exhibit a higher logical error rate than these proxy-based estimates. Consequently, the actual end-to-end logical error rate will inevitably exceed the previously reported values.
\end{enumerate}

\begin{table*}[t]
    \centering
    \caption{The discard rates at different noise strength.}
    \label{tab4}
    \begin{tabular}{c|c|c|c|c|c|c}
        \hline
        Noise Strength & 0.0005 & 0.001 & 0.002 & 0.003 & 0.004 & 0.005 \\
        \hline
        discard rate & 62.10\%  & 85.60\% & 97.92\%  &  99.70\%  & 99.96\%  &  99.99\% \\
        \hline
    \end{tabular}
\end{table*}

\begin{figure}
    \centering
    \includegraphics[width=0.9\linewidth]{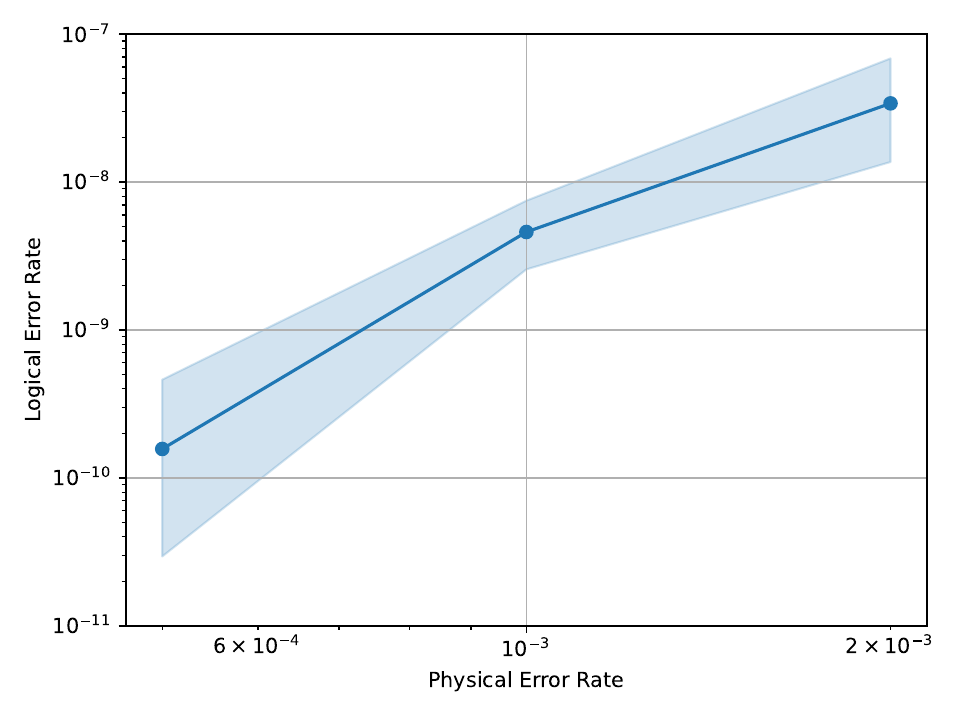}
    \caption{Logical error rate of $d=5$ magic state injection + cultivation at several physical error rates. Color highlights represent statistical uncertainty arising from Monte Carlo sampling. Each highlight spans hypothetical logical error rates that have a Bayes factor of no more than 1000 relative to the maximum likelihood hypothesis, under the assumption of a binomial distribution}
    \label{fig2}
\end{figure}

These results provide reliable numerical evidence of a gap between the conjectured and actual performance of the $d=5$ MSC protocol. Nevertheless, the reported logical error rates remain sufficiently promising to suggest the continued effectiveness of the protocol.

\subsection{Validation of Our Simulator}
\label{subsec:validation}
For $d=3$ benchmarks with $p=0.001$, our results and the data on Zenodo~\cite{GidneyZenodo} coincide, with a discard rate of $31.3\%$ and logical error rate of around $9\times10^{-7}$, not $35\%$ and $6\times10^{-7}$ given in the MSC paper~\cite{gidney2024magic}.

To validate the correctness of our GPU-based simulator, we compared its outputs against two independent implementations:

\begin{itemize}
    \item A C++ implementation of the generalized stabilizer formalism, which simulates Clifford+$T$ circuits using the same representation as the CUDA version.
    \item The general-purpose state-vector simulator and the extended stabilizer simulator provided by Qiskit.
\end{itemize}

We constructed a variety of \textit{small} Clifford+$T$ circuits and injected random Pauli noise into them.
For all circuit instances tested, the logical outcome distributions produced by our GPU simulator exactly matched those from the C++ implementation and from Qiskit's statevector simulator.
Moreover, during these cross-checks, we verified that at every simulation step the stabilizer tableau $B(\mathcal{S}, \mathcal{D})$ and the coefficient vector $\mathbf{v}$ in the GPU implementation were identical to those in the reference C++ implementation.

The source code of the C++ reference simulator is also included in our GitHub repository~\cite{SOFT2025}, so that independent users can reproduce these consistency checks.

To further increase confidence in our results for the $d=5$ MSC circuits, we provide over 10 explicit examples of rare logical FAILUREs on the ``ungrown'' $d=5$ circuit, that is, cases where the detector pattern passes but the final magic state is incorrect.
Such examples are intrinsically hard to find: if too many noise events occur, the shot will be quickly discarded by the detectors; if noise is very rare or absent, the error-correcting mechanism produces the correct magic state.
In our $d=5$ simulations, under $p=0.001$ and $p=0.002$, we typically had to simulate more than $10^9$ shots to encounter a single non-discarded logical error. These samples can also be found in \cite{SOFT2025}.

\subsection{Performance and Scaling Behavior}
\label{subsec:performance}
In this subsection, we study the performance and scaling behavior of our simulator.

{To show the performance advantage of SOFT, we compare it on $d=3$ and $d=5$ MSC circuits against two simulators: the extended-stabilizer backend in Qiskit (ES)~\cite{Qiskit_extend_stb, bravyi2019simulation} and stabilizer tensor network (STN) simulator in \cite{masot2024stn}. These two simulators are the open-source candidates we could find that satisfy:
\begin{itemize}
    \item it supports Clifford+T gates, noise, mid-circuit measurement and feedback gate (real-time control);
    \item it is based on the technique of stabilizer decompositions; 
\end{itemize}
Both ES and STN are run on a 2.6 GHz Intel Core i7 CPU. SOFT is run on an NVIDIA H800 GPU. The speedup ratio is shown in Table \ref{tab4}. Although comparing a GPU-based simulator with CPU-based simulators is not entirely fair, such a high speedup ($>10^{4}$ for all cases) indeed demonstrates the efficiency of our parallel implementation. Moreover, the speedup increases as the circuit complexity grows. We did not compare some state-vector simulators since they are good at circuits with less than $15$ qubits but they cannot handle $d=5$ MSC circuit with 42 qubits on a single CPU or GPU.}

\begin{table*}
    \centering
    \caption{The speedup of SOFT over comparison simulators. }
    \label{tab4}
    \begin{tabular}{c|c|c|c|c}
        \hline
        Circuits & SOFT's Average Time Per Shot & Comparison Simulator & Time Per Shot & Speedup ($T/T_{SOFT}$)
        \\
        \hline
        \multirow{2}{*}{d=3 cultivation} & \multirow{2}{*}{$6.68\times10^{-6}$ s} & Qiskit ES & 1.07s  &  $\mathbf{1.60\times10^{5}}$ \\
        \cline{3-5}
        & & STN & 0.21s & $\mathbf{3.14\times10^{4}}$ \\
        \hline
        \multirow{2}{*}{d=5 cultivation} & \multirow{2}{*}{$9.37\times10^{-5}$ s} & Qiskit ES & timeout ($\geq 1$ hours)  &  \textbf{N/A} \\
        \cline{3-5}
         & & STN & 59.8s & $\mathbf{6.38\times10^5}$ \\
        \hline
    \end{tabular}
\end{table*}

Second, we evaluate the scaling behavior of simulation throughput over \texttt{batch\_size} on a single GPU. 
Here, batch\_size denotes the number of independent shots that are launched \emph{concurrently} on one GPU; each shot is handled by one GPU kernel, and the GPU runtime schedules these kernels across streaming multiprocessors (SMs). As shown in Figure~\ref{fig3}, once batch\_size becomes comparable to the GPU's maximum number of resident threads, the throughput in shots per second approaches saturation. 
Further increasing batch\_size provides little additional benefit. The largest batch\_size (around 520,000) is chosen such that all concurrently active shots fit within the available H800 device memory, ensuring that the per-shot state and auxiliary buffers do not exceed the GPU memory capacity.

Finally, we show how the simulation throughput on a single GPU depends on noise strength $p$, depicted in Figure \ref{fig:thrpt_vs_noise}. From the figure we found that as $p$ increases the simulation throughput also increases. This is because the discard rate increases as $p$ increases. One shot will be immediately discarded and stopped if it failed some detector parity check, instead of simulating the whole circuit. On the other hand, as the number of noise events increases, a shot is more likely to be discarded earlier.

\begin{figure}
    \centering
    \includegraphics[width=0.80\linewidth]{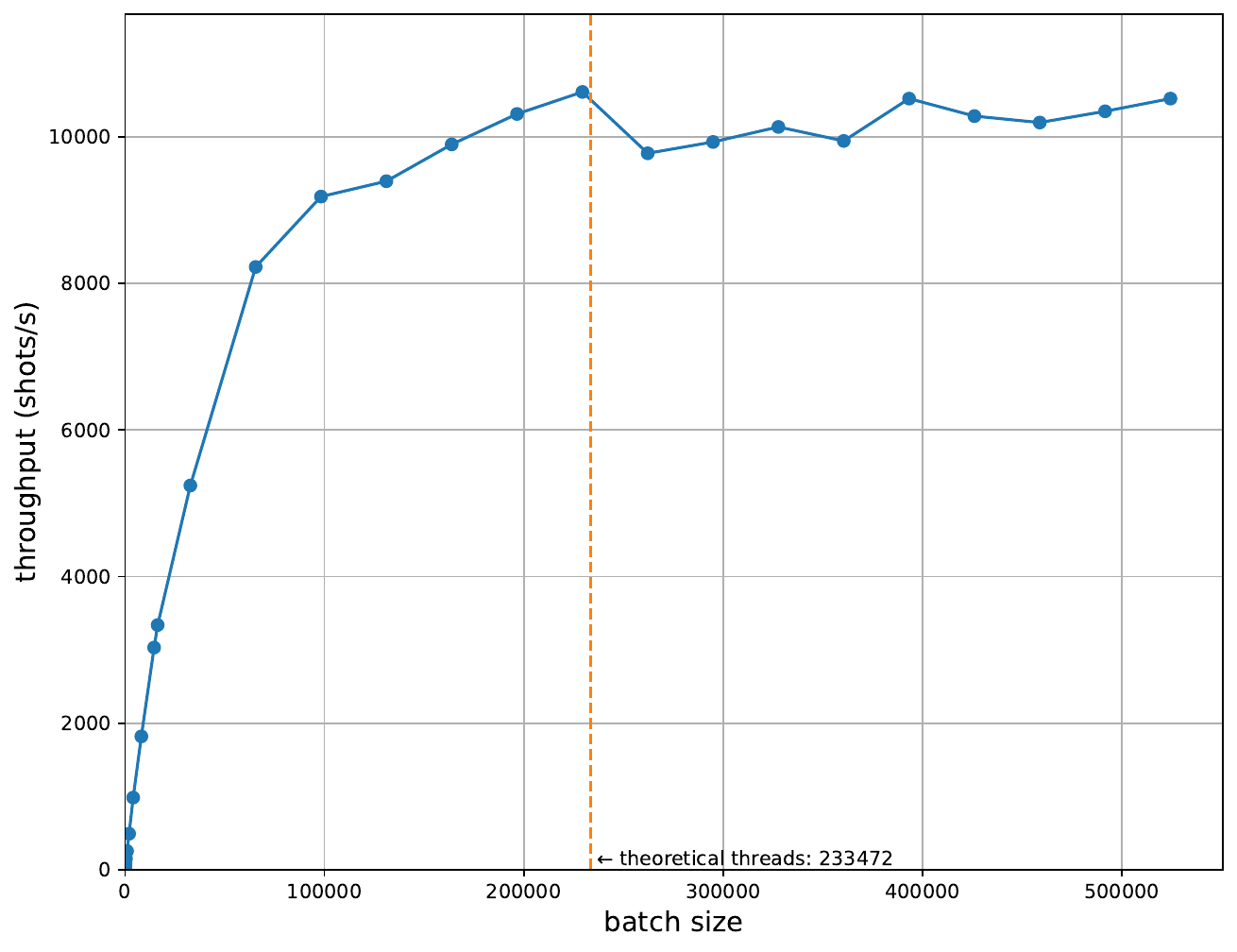}
    \caption{The sampling throughput versus \texttt{batch\_size} on a single H800 GPU, with $d$=5 MSC circuit and $p=0.001$. Here, orange dashed line means the theoretical maximum number of resident threads of H800 PCIe: 114 $\mathrm{SMs}\ \times$ 2048 threads/\rm{SM} = 233,472 threads.}
    \label{fig3}
\end{figure}

\begin{figure}
    \centering
    \includegraphics[width=0.80\linewidth]{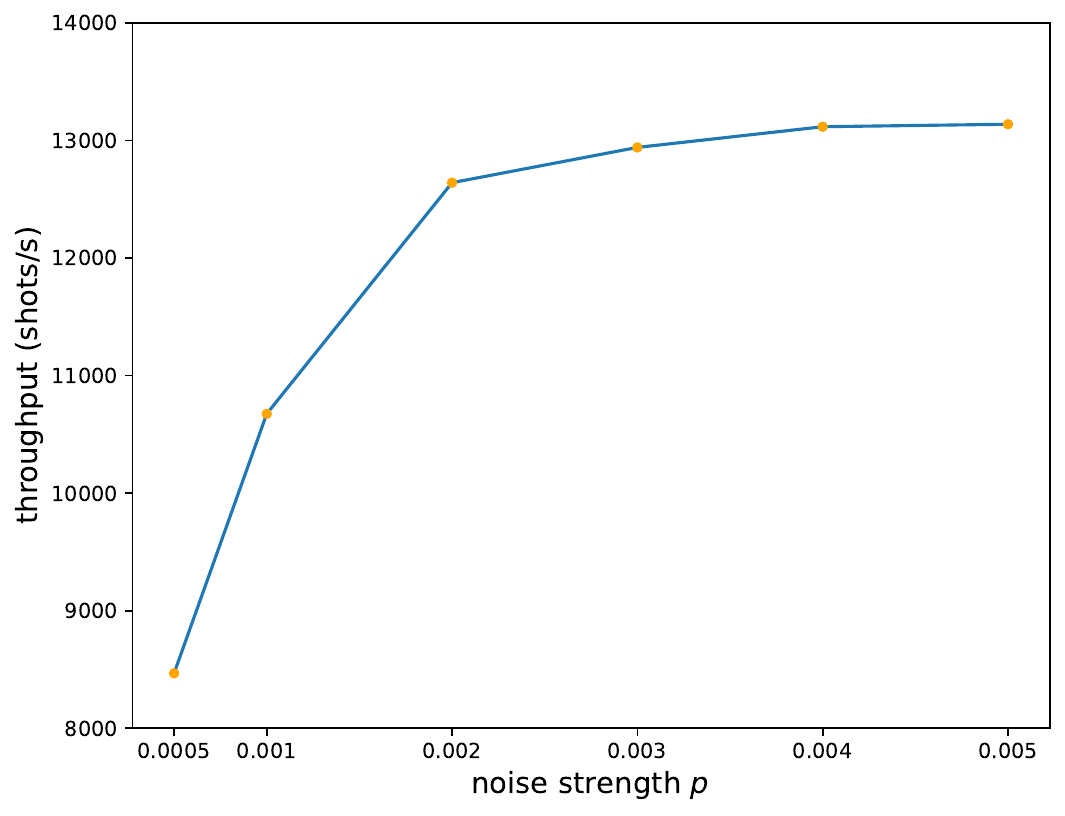}
    \caption{The sampling throughput versus noise strength $p$ on a single H800 GPU, with code distance $d$=5. During these experiments, we always set the \texttt{batch\_size} to be $2^{18} = 262,144$.}
    \label{fig:thrpt_vs_noise}
\end{figure}

\section{Conclusion}
\label{sec:conclusion}

We have presented SOFT, a high-performance simulator for universal fault-tolerant quantum circuits operating in the presence of circuit-level noise and mid-circuit measurements. 
It achieves a speedup of several orders of magnitude over existing simulators,
significantly extending the computational reach for the simulation of noisy Clifford+T circuits at scales previously considered intractable. 
This leap in simulation capability marks a pivotal shift from simulating quantum memory to simulating true quantum computation, making SOFT an valuable tool for designing resource-efficient fault-tolerant architectures.

As a concrete application, we deployed SOFT to simulate the distance-5 magic state cultivation circuit. 
Our simulation confirms the efficacy of the protocol in generating high-fidelity logical $T$ states but uncovers a significant discrepancy in the logical error rate compared to predictions derived from Clifford proxies. 
We thus resolve a critical uncertainty in the protocol's design, underscoring the necessity of reliable non-Clifford verification and prediction.

Although we have focused on Clifford+$T$ circuits in this work, our implementation strategy can be straightforwardly applied to other non-Clifford gates, such as the Toffoli gate. 
Moreover, beyond the state preparation, our simulator can also be extended to incorporate gate teleportation. We expect the simulation complexity to remain comparable, as the non-Clifford gates are concentrated within the state generation phase, while the teleportation protocol itself does not increase the non-Clifford gate count. 
Furthermore, our systems-level GPU implementation offers a highly extensible framework; it provides significant headroom for low-level performance tuning while remaining architecturally amenable to integrating higher-level algorithmic optimizations.

\section*{Acknowledgement}
We thank Wang Fang, Minpu Qin, and Pan Zhang for their helpful discussions, and acknowledge the useful feedback from StackExchange~\cite{stackEx}.

\bibliographystyle{IEEEtran}
\bibliography{references}

@article{Dennis02,
  Title                    = {Topological quantum memory},
  Author                   = {Dennis,Eric and Kitaev,Alexei and Landahl,Andrew and Preskill,John },
  Journal                  = {Journal of Mathematical Physics},
  Year                     = {2002},
  Number                   = {9},
  Pages                    = {4452-4505},
  Volume                   = {43},

  Doi                      = {10.1063/1.1499754}
}

@article{Shor95,
  title = {Scheme for reducing decoherence in quantum computer memory},
  author = {Shor, Peter W.},
  journal = {Phys. Rev. A},
  volume = {52},
  issue = {4},
  pages = {R2493--R2496},
  numpages = {0},
  year = {1995},
  month = {Oct},
  publisher = {American Physical Society},
  doi = {10.1103/PhysRevA.52.R2493},
  url = {https://link.aps.org/doi/10.1103/PhysRevA.52.R2493}
}

@article{Steane96,
  title = {Error Correcting Codes in Quantum Theory},
  author = {Steane, A. M.},
  journal = {Phys. Rev. Lett.},
  volume = {77},
  issue = {5},
  pages = {793--797},
  numpages = {0},
  year = {1996},
  month = {Jul},
  publisher = {American Physical Society},
  doi = {10.1103/PhysRevLett.77.793},
  url = {https://link.aps.org/doi/10.1103/PhysRevLett.77.793}
}

@article{Gottesman98,
  title = {Theory of fault-tolerant quantum computation},
  author = {Gottesman, Daniel},
  journal = {Phys. Rev. A},
  volume = {57},
  issue = {1},
  pages = {127--137},
  numpages = {0},
  year = {1998},
  month = {Jan},
  publisher = {American Physical Society},
  doi = {10.1103/PhysRevA.57.127},
  url = {https://link.aps.org/doi/10.1103/PhysRevA.57.127}
}

@article{He25,
  title = {Experimental Quantum Error Correction below the Surface Code Threshold via All-Microwave Leakage Suppression},
  author = {He, Tan and Lin, Weiping and Wang, Rui and Li, Yuan and others},
  journal = {Phys. Rev. Lett.},
  volume = {135},
  issue = {26},
  pages = {260601},
  numpages = {7},
  year = {2025},
  month = {Dec},
  publisher = {American Physical Society},
  doi = {10.1103/rqkg-dw31},
  url = {https://link.aps.org/doi/10.1103/rqkg-dw31}
}

@article{Google25,
  author = {Acharya, Rajeev and Abanin, Dmitry A. and Aghababaie-Beni, Laleh and others},
  date = {2025/02/01},
  date-added = {2025-03-02 22:21:26 +0800},
  date-modified = {2025-03-02 22:21:26 +0800},
  doi = {10.1038/s41586-024-08449-y},
  id = {Acharya2025},
  isbn = {1476-4687},
  journal = {Nature},
  number = {8052},
  pages = {920--926},
  title = {Quantum error correction below the surface code threshold},
  url = {https://doi.org/10.1038/s41586-024-08449-y},
  volume = {638},
  year = {2025},
  bdsk-url-1 = {https://doi.org/10.1038/s41586-024-08449-y}}

@article{Bluvstein24,
author = {Bluvstein, Dolev
  and Evered, Simon J.
  and Geim, Alexandra A.
  and others},
  date = {2024/02/01},
  doi = {10.1038/s41586-023-06927-3},
  id = {Bluvstein2024},
  isbn = {1476-4687},
  journal = {Nature},
  number = {7997},
  pages = {58--65},
  title = {Logical quantum processor based on reconfigurable atom arrays},
  url = {https://doi.org/10.1038/s41586-023-06927-3},
  volume = {626},
  year = {2024}}

@article{Moses23,
  title = {A Race-Track Trapped-Ion Quantum Processor},
  author = {Moses, S. A. and Baldwin, C. H. and Allman, M. S. and others},
  journal = {Phys. Rev. X},
  volume = {13},
  issue = {4},
  pages = {041052},
  numpages = {25},
  year = {2023},
  month = {Dec},
  publisher = {American Physical Society},
  doi = {10.1103/PhysRevX.13.041052},
  url = {https://link.aps.org/doi/10.1103/PhysRevX.13.041052}
}

@article{Litinski19,
  doi = {10.22331/q-2019-12-02-205},
  url = {https://doi.org/10.22331/q-2019-12-02-205},
  title = {Magic {S}tate {D}istillation: {N}ot as {C}ostly as {Y}ou {T}hink},
  author = {Litinski, Daniel},
  journal = {{Quantum}},
  issn = {2521-327X},
  publisher = {{Verein zur F{\"{o}}rderung des Open Access Publizierens in den Quantenwissenschaften}},
  volume = {3},
  pages = {205},
  month = dec,
  year = {2019}
}

@article{Chamberland20,
	author = {Chamberland, Christopher and Noh, Kyungjoo},
	date = {2020/10/27},
	date-added = {2025-12-28 22:09:58 +0800},
	date-modified = {2025-12-28 22:09:58 +0800},
	doi = {10.1038/s41534-020-00319-5},
	id = {Chamberland2020},
	isbn = {2056-6387},
	journal = {npj Quantum Information},
	number = {1},
	pages = {91},
	title = {Very low overhead fault-tolerant magic state preparation using redundant ancilla encoding and flag qubits},
	url = {https://doi.org/10.1038/s41534-020-00319-5},
	volume = {6},
	year = {2020},
	bdsk-url-1 = {https://doi.org/10.1038/s41534-020-00319-5}}

@article{gidney2024magic,
  title={Magic state cultivation: growing T states as cheap as CNOT gates},
  author={Gidney, Craig and Shutty, Noah and Jones, Cody},
  journal={arXiv preprint arXiv:2409.17595},
  year={2024}
}

@article{rosenfeld2025magic,
  title={Magic state cultivation on a superconducting quantum processor},
  author={Rosenfeld, Emma and Gidney, Craig and Roberts, Gabrielle and Morvan, Alexis and Lacroix, Nathan and Kafri, Dvir and Marshall, Jeffrey and Li, Ming and Sivak, Volodymyr and Abanin, Dmitry and others},
  journal={arXiv preprint arXiv:2512.13908},
  year={2025}
}

@inproceedings{haner2017,
  title={5 petabyte simulation of a 45-qubit quantum circuit},
  author={H{\"a}ner, Thomas and Steiger, Damian S},
  booktitle={Proceedings of the International Conference for High Performance Computing, Networking, Storage and Analysis},
  pages={1--10},
  year={2017}
}

@article{Sunway2019massively,
  title={Massively parallel quantum computer simulator, eleven years later},
  author={De Raedt, Hans and Jin, Fengping and Willsch, Dennis and Willsch, Madita and Yoshioka, Naoki and Ito, Nobuyasu and Yuan, Shengjun and Michielsen, Kristel},
  journal={Computer Physics Communications},
  volume={237},
  pages={47--61},
  year={2019},
  publisher={Elsevier}
}

@article{Sunway2019quantum,
  title={Quantum supremacy circuit simulation on Sunway TaihuLight},
  author={Li, Riling and Wu, Bujiao and Ying, Mingsheng and Sun, Xiaoming and Yang, Guangwen},
  journal={IEEE Transactions on Parallel and Distributed Systems},
  volume={31},
  number={4},
  pages={805--816},
  year={2019},
  publisher={IEEE}
}

@article{de2025universal,
  title={Universal Quantum Simulation of 50 Qubits on Europes First Exascale Supercomputer Harnessing Its Heterogeneous CPU-GPU Architecture},
  author={De Raedt, Hans and Kraus, Jiri and Herten, Andreas and Mehta, Vrinda and Bode, Mathis and Hrywniak, Markus and Michielsen, Kristel and Lippert, Thomas},
  journal={arXiv preprint arXiv:2511.03359},
  year={2025}
}

@article{huang2020classical,
  title={Classical simulation of quantum supremacy circuits},
  author={Huang, Cupjin and Zhang, Fang and Newman, Michael and Cai, Junjie and Gao, Xun and Tian, Zhengxiong and Wu, Junyin and Xu, Haihong and Yu, Huanjun and Yuan, Bo and others},
  journal={arXiv preprint arXiv:2005.06787},
  year={2020}
}

@inproceedings{liu2021GB,
  title={Closing the" quantum supremacy" gap: achieving real-time simulation of a random quantum circuit using a new sunway supercomputer},
  author={Liu, Yong and Liu, Xin and Li, Fang and Fu, Haohuan and Yang, Yuling and Song, Jiawei and Zhao, Pengpeng and Wang, Zhen and Peng, Dajia and Chen, Huarong and others},
  booktitle={Proceedings of the international conference for high performance computing, networking, storage and analysis},
  pages={1--12},
  year={2021}
}

@article{pan2022solving,
  title={Solving the sampling problem of the sycamore quantum circuits},
  author={Pan, Feng and Chen, Keyang and Zhang, Pan},
  journal={Physical Review Letters},
  volume={129},
  number={9},
  pages={090502},
  year={2022},
  publisher={APS}
}

@inproceedings{fu2024surpassing,
  title={Surpassing Sycamore: Achieving Energetic Superiority Through System-Level Circuit Simulation},
  author={Fu, Rong and Su, Zhongling and Zhong, Han-Sen and Zhao, Xiti and Zhang, Jianyang and Pan, Feng and Zhang, Pan and Zhao, Xianhe and Chen, Ming-Cheng and Lu, Chao-Yang and others},
  booktitle={SC24: International Conference for High Performance Computing, Networking, Storage and Analysis},
  pages={1--20},
  year={2024},
  organization={IEEE}
}

@article{ayral2023density,
  title={Density-matrix renormalization group algorithm for simulating quantum circuits with a finite fidelity},
  author={Ayral, Thomas and Louvet, Thibaud and Zhou, Yiqing and Lambert, Cyprien and Stoudenmire, E Miles and Waintal, Xavier},
  journal={PRX Quantum},
  volume={4},
  number={2},
  pages={020304},
  year={2023},
  publisher={APS}
}

@article{patra2024efficient,
  title={Efficient tensor network simulation of IBM's largest quantum processors},
  author={Patra, Siddhartha and Jahromi, Saeed S and Singh, Sukhbinder and Or{\'u}s, Rom{\'a}n},
  journal={Physical Review Research},
  volume={6},
  number={1},
  pages={013326},
  year={2024},
  publisher={APS}
}

@article{hong2020tensor,
  title={A tensor network based decision diagram for representation of quantum circuits},
  author={Hong, Xin and Zhou, Xiangzhen and Li, Sanjiang and Feng, Yuan and Ying, Mingsheng},
  journal={ACM Transactions on Design Automation of Electronic Systems (TODAES)},
  volume={27},
  number={6},
  pages={1--30},
  year={2022},
  publisher={ACM New York, NY}
}

@article{vinkhuijzen2023limdd,
  title={{LIMDD}: A decision diagram for simulation of quantum computing including stabilizer states},
  author={Vinkhuijzen, Lieuwe and Coopmans, Tim and Elkouss, David and Dunjko, Vedran and Laarman, Alfons},
  journal={Quantum},
  volume={7},
  pages={1108},
  year={2023},
  publisher={Verein zur F{\"o}rderung des Open Access Publizierens in den Quantenwissenschaften}
}

@inproceedings{sistla2023symbolic,
  title={Symbolic quantum simulation with quasimodo},
  author={Sistla, Meghana and Chaudhuri, Swarat and Reps, Thomas},
  booktitle={International Conference on Computer Aided Verification},
  pages={213--225},
  year={2023},
  organization={Springer}
}

@article{bravyi2016improved,
  title={Improved classical simulation of quantum circuits dominated by Clifford gates},
  author={Bravyi, Sergey and Gosset, David},
  journal={Physical review letters},
  volume={116},
  number={25},
  pages={250501},
  year={2016},
  publisher={APS}
}

@article{bravyi2019simulation,
  title={Simulation of quantum circuits by low-rank stabilizer decompositions},
  author={Bravyi, Sergey and Browne, Dan and Calpin, Padraic and Campbell, Earl and Gosset, David and Howard, Mark},
  journal={Quantum},
  volume={3},
  pages={181},
  year={2019},
  publisher={Verein zur F{\"o}rderung des Open Access Publizierens in den Quantenwissenschaften}
}

@inproceedings{fang2024symphase,
  title={Symphase: Phase symbolization for fast simulation of stabilizer circuits},
  author={Fang, Wang and Ying, Mingsheng},
  booktitle={Proceedings of the 61st ACM/IEEE Design Automation Conference},
  pages={1--6},
  year={2024}
}

@article{masot2024stn,
  title={Stabilizer tensor networks: Universal quantum simulator on a basis of stabilizer states},
  author={Masot-Llima, Sergi and Garcia-Saez, Artur},
  journal={Physical Review Letters},
  volume={133},
  number={23},
  pages={230601},
  year={2024},
  publisher={APS}
}

@article{nakhl2025stn,
  title={Stabilizer tensor networks with magic state injection},
  author={Nakhl, Azar C and Harper, Ben and West, Maxwell and Dowling, Neil and Sevior, Martin and Quella, Thomas and Usman, Muhammad},
  journal={Physical Review Letters},
  volume={134},
  number={19},
  pages={190602},
  year={2025},
  publisher={APS}
}

@article{yoder2012generalization,
  title={A generalization of the stabilizer formalism for simulating arbitrary quantum circuits},
  author={Yoder, Theodore J},
  journal={See http://www.scottaaronson.com/showcase2/report/ted-yoder.pdf},
  year={2012}
}

@book{gottesman1997stabilizer,
  title={Stabilizer codes and quantum error correction},
  author={Gottesman, Daniel},
  year={1997},
  publisher={California Institute of Technology}
}

@article{aaronson2004improved,
  title={Improved simulation of stabilizer circuits},
  author={Aaronson, Scott and Gottesman, Daniel},
  journal={Physical Review A—Atomic, Molecular, and Optical Physics},
  volume={70},
  number={5},
  pages={052328},
  year={2004},
  publisher={APS}
}

@article{gidney2021stim,
  title={Stim: a fast stabilizer circuit simulator},
  author={Gidney, Craig},
  journal={Quantum},
  volume={5},
  pages={497},
  year={2021},
  publisher={Verein zur F{\"o}rderung des Open Access Publizierens in den Quantenwissenschaften}
}

@misc{SOFT2025,
  author       = {Riling Li and Keli Zheng and Yiming Zhang},
  title        = {github repo: SOFT},
  year         = {2025},
  howpublished = {\url{https://github.com/haoliri0/SOFT}}
}

@misc{GidneyZenodo,
  author       = {Craig Gidney},
  title        = {Data for \texttt{"}Magic state cultivation: growing T states as cheap as CNOT gates\texttt{"}},
  year         = {2024},
   howpublished ={\url{https://zenodo.org/records/13777072}}
}

@misc{Qiskit_extend_stb,
    author = {Qiskit},
    title = {The Extended Stabilizer Simulator},
    howpublished ={\url{https://qiskit.github.io/qiskit-aer/tutorials/6_extended_stabilizer_tutorial.html}}
}

@misc{stackEx,
    title = {Check the fixes for $d=5$ magic state cultivation circuit},
    howpublished ={\url{https://quantumcomputing.stackexchange.com/questions/44514/check-the-fixes-for-d-5-magic-state-cultivation-circuit}
    }
}

\end{document}